\def\draft{0}
\newcommand{\F}{\mathbb{F}}
\newcommand{\C}{\mathcal{C}}
\newcommand{\AG}{\mathcal{AG}}
\newcommand{\mspan}{\mathrm{span}}
\newtheorem{theorem}{Theorem}[section]
\newtheorem{claim}[theorem]{Claim}
\newtheorem{prop}[theorem]{Proposition}
\newtheorem{lemma}[theorem]{Lemma}
\newtheorem{definition}[theorem]{Definition}
\newcommand{\dist}{\mathop{\rm dist}}
\newcommand{\gnote}[1]{{\color{red} [Gabe: #1]}}
\newcommand{\snote}[1]{{\color{blue} [Sumegha: #1]}}
\newcommand{\mnote}[1]{{\color{teal} [Madhu: #1]}}
\newcommand{\gnote}[1]{}
\newcommand{\snote}[1]{}
\newcommand{\mnote}[1]{}
\newcommand{\Row}{\mathsf{R}}
\newcommand{\Col}{\mathsf{C}}
\renewcommand{\d}{d}
\title{Testing Tensor Products of Algebraic Codes}
\author{Sumegha Garg\thanks{Department of Computer Science, Rutgers University, Piscataway, New Jersey, USA. Part of this work was done when the author was at Harvard University supported by Michael O. Rabin Postdoctoral Fellowship, and at Stanford University supported by the Simons Collaboration on the theory of algorithmic fairness and Simons Foundation Investigators Award 689988. Email: \texttt{sumegha.garg@rutgers.edu}.} \and Madhu Sudan\thanks{School of Engineering and Applied Sciences, Harvard University, Cambridge, Massachusetts, USA. Supported in part by a Simons Investigator Award and NSF Award CCF 2152413. Email: \texttt{madhu@cs.harvard.edu}.} \and Gabriel Wu\thanks{Harvard University, Cambridge, Massachusetts, USA. Email: \texttt{gabrielwu@alumni.harvard.edu}.}}
\date{}
\begin{document}

\maketitle

\begin{abstract}
    Motivated by recent advances in locally testable codes and quantum LDPCs based on robust testability of tensor product codes, we explore the local testability of tensor products of (an abstraction of) algebraic geometry codes. Such codes are parameterized by, in addition to standard parameters such as block length $n$ and dimension $k$, their genus $g$. We show that the tensor product of two algebraic geometry codes is robustly locally testable provided $n = \Omega((k+g)^2)$. Apart from Reed-Solomon codes, this seems to be the first explicit family of two-wise tensor codes of high dual distance that is robustly locally testable by the natural test that measures the expected distance of a random row/column from the underlying code. 
\end{abstract}

\section{Introduction}

In this work we consider the ``robust local testability'' of ``tensor products'' of ``algebraic geometry codes''. We review each of these notions below before describing our results.

All codes considered in this paper are linear codes over some finite field $\F_q$. Given two codes $R \subseteq \F_q^m$ and $C \subseteq \F_q^n$, their tensor product $R \otimes C \subseteq \F_q^{nm}$ consists of all $n \times m$ matrices $M$ whose rows are codewords of $R$ and whose columns are codewords of $C$. It is a simple but valuable exercise in linear algebra to note that the dimension of $R \otimes C$ is the product of the dimensions of $R$ and $C$. It is an arguably simpler exercise to see that the distance of the code $R\otimes C$ is the product of the distance of $R$ and $C$. (In this paper by distance we mean, either absolute or normalized, Hamming distance. This particular statement holds for either of these terms.) Our interest in tensor product codes comes from their potential testability properties elaborated next.

Given an arbitrary matrix $A \in \F_q^{n\times m}$, the definition of a tensor product defines a natural test to see if $A \in R \otimes C$, namely verify every row is in $R$ and every column is in $C$. The ``robust testability'' property explores the robustness of this definition in measuring distances: If the expected distance of a uniformly chosen row of $A$ from $R$ is small, and so is the expected distance of a uniformly chosen column, then is $A$ close to a codeword of $R \otimes C$? A tensor product of codes is said to be robustly testable if the answer is affirmative. (This notion is formulated more precisely and quantitatively in \cref{def:robust}.) 

The formal study of robustness of tensor product codes was initiated by Ben-Sasson and Sudan~\cite{BSS-tensor}. We elaborate on their motivation shortly, but for now mention that most of the general work in this setting, including that in \cite{BSS-tensor} and the work of Viderman~\cite{Viderman:tensor}, considered a variant of the tensor-product testing question raised above. Specifically they consider $m$-wise tensor products of codes for $m \geq 3$ and showed that these are robust with respect to a slightly more complex ``two-wise test'' (where the test measures the expected distance of a random two-dimensional projection from the underlying two-wise tensor) as long as the code being tensored has sufficiently large distance. 

Robustness of two-wise tensors has been shown for very few classes of codes in the literature so far. Briefly the works have considered the tensor product when the codes $R$ and $C$ are Reed-Solomon codes~\cite{RubinfeldS,AroSaf,PolSpi} or when dual distance of $C$ is small~\cite{DSW}\footnote{Dual of a code $C\subseteq \F_q^n$ is defined as the set of vectors orthogonal to $C$, that is, $C^{\perp}=\{x\in\F_q^n\mid \forall y\in C, \langle x,y\rangle=0\}$. By dual distance, we refer to the distance of the dual code. }, or when $R$ and $C$ are random. The work of Valiant~\cite{valiant} also gives examples of asymptotically good codes whose tensor product is not robustly testable --- suggesting that robustness needs additional properties (other than just rate and distance) in the ingredient codes. 

In this work, we consider new classes of codes which are generalizations of Reed-Solomon codes. The codes we study are abstractions of algebraic geometry codes - such codes come as an entire sequence of codes characterized by two main features: (a) they approach the Singleton bound in terms of their distance vs. dimension tradeoff with the additive gap termed the ``genus'' (a parameter that is derived from the genus of some underlying algebraic curves, but gets a purely coding theoretic interpretation in this abstraction); and (b) the Hadamard product of two codewords of two codes of small dimension is a codeword of a code in the sequence of only slightly larger dimension. (See \cref{def:ag-codes} for a precise formulation.) 

The main result in this paper shows that the tensor product of two sequences of AG codes is robustly testable provided their dimension and genus is sufficiently small compared to the length of the code. (See \cref{theorem_robust} for a precise statement.) Before describing getting into the specifics we give some context and motivation for the study of testing of tensor product codes.

\subsection{Robust Testability of Tensor Product: Background and Motivation}

Robust testability of tensor codes was studied in the work of Ben-Sasson and Sudan~\cite{BSS-tensor}, who raised the question of whether $R \otimes C$ is robustly testable for all codes of sufficiently large relative distance. Their question was inspired by the role of the ``bivariate polynomial tester'' in the works on PCPs, originating in the works of Babai, Fortnow, Levin and Szegedy~\cite{BFLS} and explored systematically in the work of Rubinfeld and Sudan~\cite{RubinfeldS}. Seminal results in this space include the work of Arora and Safra~\cite{AroSaf} who showed that Reed-Solomon codes of inverse polynomial rate have constant robustness, and the work of Polishchuk and Spielman~\cite{PolSpi}, who extended the result of \cite{AroSaf} to Reed-Solomon codes of any linear rate bounded away from $1/2$. The question raised in \cite{BSS-tensor} was however answered negatively by Valiant~\cite{valiant} (see also \cite{CopRud,GolMei}, who gave codes of relative distance arbitrarily close to $1$ that were not robustly testable). These works motivated the search for specific classes of codes, other than Reed-Solomon codes, whose tensor product is robustly testable.

The robust testability of tensor product codes has played a central role in many results over the years. The robust testability of Reed-Solomon codes is used ubiquitously in PCP constructions. It also plays a role in the breakthrough result of Ji, Natarajan, Vidick, Wright and Yuen~\cite{MIPRE} showing MIP$^*$=RE. Indeed this motivated the same authors~\cite{quantum-tensor}  to explore the quantum testability of tensor products of codes. It also led to the first combinatorial constructions of LTCs of nearly linear rate in the work of Meir~\cite{Meir} and first known constructions of strong LTCs with nearly linear rate in the work of Viderman~\cite{Viderman}. The robustness of tensor product codes also plays a central role in the recent breakthroughs of Dinur, Evra, Livne, Lubotzky and Mozes~\cite{DELLM} and Panteleev and Kalachev~\cite{PanKal} (see also \cite{LevZem}) giving constant query LTCs of linear rate and distance, and quantum LDPCs. We remark that the role of tensor product codes here is essential --- these are the only sources of ``redundancy'' among the local constraints in the constructions of \cite{DELLM,PanKal,LevZem} and such redundancies are necessary to get LTCS as shown in \cite{BGKSV}. Indeed the only two known sources of redundancy among constraints in LDPCs come from symmetries (see \cite{KauSud}) or the tensor product construction. 

These many applications motivate the quest for general tools in the analysis of robustness of tensor product codes. To date the only known works on robust testability of tensor codes are (1) the aforementioned works on Reed-Solomon codes, (2) a work of Dinur, Sudan and Wigderson~\cite{DSW} roughly showing that $R \otimes C$ is robustly testable if $C$ is an LDPC code and (3) recent works, notably by Panteleev and Kalachev~\cite{PanKal} and Leverrier and Z\'emor~\cite{LevZem}, showing that tensor products of random linear codes (and their duals) are robustly testable\footnote{In this work, we prove robust testability of tensor products of an abstraction of algebraic geometry codes. As the dual distance of LDPC codes is a constant by definition, this gives us first explicit family of codes of super-constant dual distance that is robustly locally testable after Reed-Solomon codes (to the best of our knowledge). Explicit constructions for AG codes  with super-constant dual distance are known (see Section 2.7 in \cite{hoholdt1998algebraic}). }, with improved parameters in \cite{KalPan:2-local,DinurHLV23}. (We remark that some of the interest in 2-dimensional robust testability is due to an equivalence with a notion called  ``product expansion'' of codes, an equivalence that holds only for two-dimensional tensor products. Both the notion of robustness and product expansion do extend to higher dimensional products but they are no longer equivalent. See~\cite{KalPan:3-local} and references therein.)


The lack of more general results motivates us to study the testability results for products of Reed-Solomon codes. (We note the other explicit result~\cite{DSW} is already generic, while the robustness in the setting of Reed-Solomon codes is not.) However, the current analyses of robustness in this setting are very algebraic and to make this analysis more generic, one needs an appropriate abstraction of the underlying algebra and we discuss this next.

\subsection{Abstracting Algebraic Codes} 

Over the years coding theorists have proposed nice abstractions of algebraic codes - see for instance~\cite{DuuKot,Pellikaan}. These works abstract a product property that captures the fact that the product of two polynomials of degree $d$ has degree at most $2d$. 
This corresponds to the fact that the coordinate-wise product of two codewords from a (roughly) $d$ dimensional space is contained in a (roughly) $2d$ dimensional space, which is a highly non-trivial effect. (In contrast for a generic linear code $C \subseteq \F^n$ of dimension $d$, the smallest linear space that contains all the coordinate-wise products of pairs of codewords from $C$ has dimension $\min\{d^2,n\}$.) This non-trivial product property seen in Reed-Solomon codes when abstracted properly captures most algebraic codes (including Reed-Muller and algebraic geometry codes) nicely and suffices to explain most decoding algorithms for such codes. However other algorithms, such as list-decoding algorithms, use more involved properties (see \cite{GuSu-ag}) that include the ability to capture multiplicities of zeroes and the ability to shift a polynomial without increasing its degree (i.e., $f(x)$ has the same degree as $f(x+a)$). 

The current analyses of robust testability of tensor products of Reed-Solomon codes use many aspects of polynomials in addition to the product property. For instance they rely on unique factorization, on the role of resultants in computing greatest common divisors (and the fact that resultants themselves are low-degree polynomials). They use the fact that puncturing of Reed-Solomon codes are Reed-Solomon codes etc. Given all these aspects in the proofs, it is interesting to see how far one can go with more minimal assumptions. 

In this work, we use a simple quantitative version of the product property (see \cref{def:ag-codes}) which naturally captures algebraic geometry codes, (but not for instance Reed-Muller codes). In particular we avoid use of unique factorization and GCDs, and also avoid explicit use of the puncturing property. This allows us to recover a moderately strong version of the analysis for Reed-Solomon codes: specifically we can analyze codes that have block length at least quadratic in the dimension. Thus our work is not strong enough to imply the result of Polishchuk and Spielman~\cite{PolSpi} who only require block length linear in the dimension; but is stronger than the previous work of Arora and Safra~\cite{AroSaf} (and implies their result) who showed that the tensor product of two Reed-Solomon codes of dimension $d$ and length $n$ is robustly testable provided $n = \Omega(d^3)$. 

\mnote{Can try adding an overview of proof here if we have time.}

The next section presents our formal definitions and theorem statement.

\section{Definitions and Main Result}

We use $\F_q$ to denote the finite field on $q$ elements. 
We use functional notation to describe vectors, so the vector space $\F_q^n$ will be viewed and represented as functions $\F_q^n = \{f:S\to \F_q\}$ for some set $S$ with $|S| = n$. (Often we use $S = [n]$.) Note that with this notation we naturally have the notion of $f+g$ and $fg$ both of which are in $\F_q^n$. For functions $f:S \to \F_q$ and $g:T \to \F_q$, we use $f \otimes g: S \times T \to \F_q$ to denote the function $(f\otimes g)(x,y) = f(x)g(y)$. If $f\in \F_q^m$ and $g \in \F_q^n$, note that we have $f \otimes g \in \F_q^{n \times m} \cong \F_q^{nm}$. Here, we assume that a ``row" is indexed by $y$ and varies $x\in [m]$ (vice versa for ``column").
For $f,g:S \to \F_q$,  we use $\dist(f,g)$ to denote the absolute (non-normalized) Hamming distance between $f$ and $g$, i.e., $\dist(f,g) = |\{x \in S | f(x) \ne g(x) \}|$ and we use $\delta(f,g) = \frac{1}{|S|} \cdot \dist(f,g)$ to denote the normalized Hamming distance.

We consider linear codes $C \subseteq \F_q^n$. For such a code we use $\dim(C)$ to denote its dimension as a vector space, $\dist(C)$ to denote its (non-normalized) minimum distance (between any two code vectors). For a vector $f \in \F_q^n$ and code $C \subseteq \F_q^n$ we use $\delta(f,C)$ to denote the distance of $f$ to the nearest codeword in $C$, i.e., $\min_{g \in C} \{\delta(f,g)\}$. 
For codes $C_1, C_2 \subseteq \F_q^n$, we use $C_1 \star C_2$ to denote its Hadamard product, i.e., $C_1 \star C_2 = \mspan(\{fg | f \in C_1, g \in C_2\})$. For codes $C_1 \subseteq \F_q^m$ and $C_2 \subseteq \F_q^n$, we let $C_1 \otimes C_2$ denote their tensor product, i.e.,  $C_1 \otimes C_2 = \mspan(\{f \otimes g | f \in C_1, g \in C_2\})$. For a matrix $A\in \F_q^{nm}$, it is a simple exercise to see that $A\in C_1 \otimes C_2$ iff every row of $A$ is in $C_1$ and every column is in $C_2$ (when $C_1$ and $C_2$ are linear codes).

\begin{definition}[(Abstract) Algebraic Geometry Code]\label{def:ag-codes} Given integers $0\leq g \leq n$ and a prime power $q$, an $(q,n,g)$-algebraic geometry code sequence is a sequence of linear codes $\C = \{\C(\ell) \subseteq \F_q^n \mid 0 \leq \ell \leq n\}$ with the following requirements: 

(1) For every $\ell$, $\dim(\C(\ell)) \geq \ell-g$ and $\dist(\C(\ell)) \geq n-\ell$. 

(2) For every $\ell, m$, $\C(\ell) \star \C(m) \subseteq \C(\ell+m)$.

We use $\AG(q,n,g)$ to denote the space of all $(q,n,g)$-algebraic geometry code sequences.
\end{definition}

Condition (2) above is called the product property and it is what makes algebraic codes special.

\begin{definition}[Robustness of Tensor Product]\label{def:robust} For codes $C_1 \subseteq \F_q^m$ and $C_2 \subseteq \F_q^n$ and $0 \leq \rho \leq 1$ we say that $(C_1,C_2)$ is $\rho$-robust
if for every $F\in \F_q^{n \times m}$ we have 
$$ \rho \cdot \delta(F,C_1 \otimes C_2) \leq \frac12[\delta(F,C_1 \otimes \F_q^n) + \delta(F,\F_q^m \otimes C_2)].$$
\end{definition}

For a linear code $C$ and set $A \subseteq S$, let $C|_A = \{x|_{A} \,\mid  x \in C\} \subseteq \F_q^{|A|}$ be the projection of $C$ to the coordinates of $A$. We will use the following proposition about tensor products in our proof. 
\begin{prop}
\label{tensor_restrict}
    If $C_1, C_2 \subseteq \F_q^n$ are linear codes and $A, B \subseteq [n]$, then 
    \[C_1|_A \otimes C_2|_B = (C_1 \otimes C_2)|_{A \times B}.\]
\end{prop}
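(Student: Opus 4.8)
The plan is to prove the two inclusions separately, both of which reduce to one elementary fact: for any subset $S$ of a vector space $V$ and any linear map $T\colon V\to W$, we have $T(\mspan(S)) = \mspan(T(S))$. I will apply this with $V = \F_q^{m\times n}$ (identified with $\F_q^{mn}$), $W = \F_q^{|A|\times|B|}$, with $T$ the coordinate projection $M\mapsto M|_{A\times B}$, and with $S = \{f\otimes g \mid f\in C_1,\ g\in C_2\}$, so that $\mspan(S) = C_1\otimes C_2$ by definition.

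First I would record the key compatibility between projection and the $\otimes$ of individual vectors. For $f\in\F_q^m$ and $g\in\F_q^n$, the matrix $f\otimes g$ has $(x,y)$-entry $f(x)g(y)$, so for $(x,y)\in A\times B$ we get $(f\otimes g)|_{A\times B}(x,y) = f(x)g(y) = (f|_A)(x)\cdot(g|_B)(y)$, i.e.\ $(f\otimes g)|_{A\times B} = f|_A\otimes g|_B$. Also $T$ is linear, since it merely selects a fixed set of coordinates. Combining these, $T(S) = \{(f\otimes g)|_{A\times B}\mid f\in C_1, g\in C_2\} = \{f|_A\otimes g|_B \mid f\in C_1, g\in C_2\}$, and since $\{f|_A\mid f\in C_1\} = C_1|_A$ and $\{g|_B\mid g\in C_2\} = C_2|_B$, this is precisely the generating set of $C_1|_A\otimes C_2|_B$; hence $\mspan(T(S)) = C_1|_A\otimes C_2|_B$.

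It then remains to invoke $T(\mspan(S)) = \mspan(T(S))$: since $T$ is linear, the image of a linear combination of elements of $S$ is the corresponding linear combination of their images, giving $T(\mspan(S))\subseteq\mspan(T(S))$, and conversely each $T(s)$ lies in $T(\mspan(S))$, a subspace, giving the reverse containment. Chaining the equalities yields $(C_1\otimes C_2)|_{A\times B} = T(\mspan(S)) = \mspan(T(S)) = C_1|_A\otimes C_2|_B$, which is exactly the claim. (If one prefers to exhibit the two inclusions by hand: for $\subseteq$ from the right side to the left, expand an element of $(C_1\otimes C_2)|_{A\times B}$ as $\bigl(\sum_i f_i\otimes g_i\bigr)|_{A\times B} = \sum_i f_i|_A\otimes g_i|_B \in C_1|_A\otimes C_2|_B$; for the other direction, each generator $f'\otimes g'$ of $C_1|_A\otimes C_2|_B$ with $f' = f|_A$, $g' = g|_B$ equals $(f\otimes g)|_{A\times B}$.)

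I do not expect a genuine obstacle here; the only point requiring a moment of care is the bookkeeping in the identification $\F_q^{m\times n}\cong\F_q^{mn}$ and confirming that ``restriction to $A\times B$'' is the honest coordinate projection under this identification (and that the ``rows indexed by $y$, columns by $x$'' convention is respected), after which the statement is pure linear algebra.
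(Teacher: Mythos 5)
Your proof is correct, and it takes a different (arguably cleaner) route than the paper's. The paper fixes generator matrices $M_1, M_2$ for $C_1, C_2$, writes the tensor code concretely as $\{M_2^T X M_1 : X \in \F_q^{k_2\times k_1}\}$, and then verifies directly via a chain of matrix identities that column/row restriction of $M_2^T X M_1$ coincides with using the restricted generators $M_2^T|_{B\times[k_2]}$ and $M_1|_{[k_1]\times A}$. Your argument instead stays coordinate-free: you observe that restriction to $A\times B$ is a linear map $T$, that $T$ interacts simply with pure tensors via $(f\otimes g)|_{A\times B} = f|_A\otimes g|_B$, and then invoke the general fact that a linear map commutes with taking spans, so $T(\mspan S)=\mspan(T(S))$. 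Both proofs are correct and complete; yours avoids choosing bases and is essentially a one-liner once the compatibility on pure tensors is noted, while the paper's version is more explicit about the generator-matrix representation, which some readers may find more concrete (and which connects directly to the way the code will actually be used later as a matrix code). Neither approach has a gap.
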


\proof
    Let $\dim C_1 = k_1$ and $\dim C_2 = k_2$. Then there exist matrices $M_1 \in \F_q^{k_1 \times n}$ and $M_2 \in \F_q^{k_2 \times n}$ such that
    $$C_1 = \{xM_1 \,|\, x \in \F_q^{k_1}\} \;\;\text{ and }\;\; C_2 = \{xM_2 \,|\, x \in \F_q^{k_2}\},$$
  and their tensor product can then be expressed as
    $$C_1 \otimes C_2 = \{M_2^T X M_1 \,|\, X \in \F_q^{k_2 \times k_1}\}.$$

    Note that $C_1|_A$ is generated by the restriction of $M_1$ to the columns of $A$ (i.e. $M_1 |_{[k_1] \times A}$). Similarly for $C_2|_B$. Our goal is now to show that
    $$\{(M_2^T|_{B \times [k_2]}) X (M_1|_{[k_1] \times A}) \,|\, X \in \F_q^{k_2 \times k_1}\} = \{(M_2^T X M_1)|_{B \times A} \,|\, X \in \F_q^{k_2 \times k_1}\}.$$
    It suffices to show that $\forall X$,
    $$(M_2^T|_{B \times [k_2]}) X (M_1|_{[k_1] \times A}) = (M_2^T X M_1)|_{B \times A}.$$
    This follows because:
    \begin{align*}
    (M_2^T|_{B \times [k_2]}) X (M_1|_{[k_1] \times A}) &= (M_2^T X)|_{B \times [k_1]} (M_1|_{[k_1] \times A}) \\
    &= (M_2^T X (M_1|_{[k_1] \times A}))|_{B \times A}  \\
    &= (M_2^T (X M_1)|_{[k_2] \times A}))|_{B \times A}  \\
    &= ((M_2^T X M_1)|_{[n] \times A})|_{B \times A}  \\
    &= (M_2^T X M_1)|_{B \times A}.&\qedhere
    \end{align*}

\begin{theorem}
\label{theorem_robust}
    There exist constants $\rho > 0$ and $c_0 < \infty$ such that for every triple of non-negative integers $n,\ell,g$ and prime power $q$ satisfying $\ell > \max\{c_0, g\}$ and $n > c_0(\ell+g)^2$ we have the following: If $\C_1,\C_2 \in \AG(q,n,g)$, then $(\C_1(\ell),\C_2(\ell))$ is $\rho$-robust. 
\end{theorem}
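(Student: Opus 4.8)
The plan is to adapt the Polishchuk--Spielman / Arora--Safra analysis of bivariate polynomial codes, replacing the polynomial arithmetic used there by the product property (condition (2) of \cref{def:ag-codes}) together with the distance and dimension bounds of condition (1). Write $m=n$, and for $F\in\F_q^{n\times n}$ call $F(\cdot,y)$ its $y$-th row and $F(x,\cdot)$ its $x$-th column; set $\delta_{\mathrm r}=\delta(F,\C_1(\ell)\otimes\F_q^n)$, $\delta_{\mathrm c}=\delta(F,\F_q^m\otimes\C_2(\ell))$ and $\mu=\delta_{\mathrm r}+\delta_{\mathrm c}$, noting that $\delta_{\mathrm r}$ is exactly the average over $y$ of $\delta(F(\cdot,y),\C_1(\ell))$, and symmetrically for $\delta_{\mathrm c}$. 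The first step is a routine reduction: it suffices to produce a universal constant $c$ and, for every $F$, a codeword $Q\in\C_1(\ell)\otimes\C_2(\ell)$ with $\delta(F,Q)\le c\mu$, for then $\rho:=\min\{1/(2c),1/2\}$ witnesses \cref{def:robust} (when $\mu$ exceeds a fixed constant the inequality is immediate since distances are at most $1$). So assume henceforth that $\mu$ is below a small absolute constant; also $\ell\le n/2$, as this follows from $\ell>c_0$ and $n>c_0(\ell+g)^2$.

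The second step is one-sided correction. Let $R$ be $F$ with each row replaced by a nearest codeword of $\C_1(\ell)$ and $C$ be $F$ with each column replaced by a nearest codeword of $\C_2(\ell)$; then $\delta(F,R)=\delta_{\mathrm r}$, $\delta(F,C)=\delta_{\mathrm c}$, every row of $R$ lies in $\C_1(\ell)$, every column of $C$ lies in $\C_2(\ell)$, and $\delta(R,C)\le\mu$. Thus it is enough to find $Q\in\C_1(\ell)\otimes\C_2(\ell)$ with $\delta(R,Q)=O(\mu)$. Let $B=\{(x,y):R(x,y)\ne C(x,y)\}$, so $|B|\le\mu n^2$, and build a bivariate error locator: choose the least $D$ with $(D-g)^2>|B|$, so $D\le g+\sqrt\mu\,n+1$; here $n>c_0(\ell+g)^2$ together with $\mu$ small is used to guarantee $D+\ell<n$ and that every code $\C_i(D)$, $\C_i(D+\ell)$ appearing below has distance exceeding $\ell$. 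By condition (1), $\dim\!\bigl(\C_1(D)\otimes\C_2(D)\bigr)=\dim\C_1(D)\cdot\dim\C_2(D)\ge(D-g)^2>|B|$, so (linear constraints) there is a nonzero $E\in\C_1(D)\otimes\C_2(D)$ vanishing on all of $B$.

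The third step transfers structure through $E$ using the product property. For each $y$, $E(\cdot,y)\in\C_1(D)$ and $R(\cdot,y)\in\C_1(\ell)$, so condition (2) puts the pointwise product $E(\cdot,y)\cdot R(\cdot,y)$ in $\C_1(D+\ell)$; symmetrically every column of $E\cdot C$ lies in $\C_2(D+\ell)$. Since $E$ vanishes on $B$ and $R=C$ off $B$, the matrices $E\cdot R$ and $E\cdot C$ agree everywhere; denote the common matrix by $W$, so $W\in\C_1(D+\ell)\otimes\C_2(D+\ell)$, a codeword of a tensor code whose two factors still have distance larger than $\ell$.

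The final step --- which I expect to be the main obstacle --- is to recover from $W$ and $E$ a codeword $Q\in\C_1(\ell)\otimes\C_2(\ell)$ with $\delta(R,Q)=O(\mu)$. In the Reed--Solomon case one simply divides: $Q=W/E$ is forced to have bidegree $(\ell-1,\ell-1)$ (hence lies in the target tensor code) and agrees with $R$ off the zero set of $E$. Lacking unique factorization, one must instead argue combinatorially. For each $y$ with $E(\cdot,y)\ne0$ the equation $E(\cdot,y)\cdot q=W(\cdot,y)$ has at most one solution $q\in\C_1(\ell)$ --- two solutions would differ by a nonzero $\C_1(\ell)$-codeword vanishing on the at least $n-D>\ell$ non-zeros of $E(\cdot,y)$, which is impossible; likewise column-wise --- and this solution is $R(\cdot,y)$. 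The task is to show that these local row- and column-quotients are mutually consistent over the bulk of the $n\times n$ grid and therefore assemble into a single element of $\C_1(\ell)\otimes\C_2(\ell)$ (using \cref{tensor_restrict} to pass between the tensor code and its coordinate restrictions), and, crucially, that the resulting $Q$ differs from $R$ on only an $O(\mu)$ fraction of entries rather than the $O(\sqrt\mu)$ fraction one gets naively from the size of the zero set of $E$ (which is about $2Dn\approx 2\sqrt\mu\,n^2$). Obtaining this \emph{linear}, rather than square-root, dependence --- using only conditions (1) and (2) --- is where the real effort lies, and appears to be exactly what forces the block length up from the $\Omega(\ell)$ of Polishchuk--Spielman to $\Omega((\ell+g)^2)$: one needs a large multiplicative gap between $n$ and $\ell+g$ so that the codes $\C_i(D+\ell)$ and their restrictions retain enough distance for the quotients to be unique and for the extraction to be controlled. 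Once such a $Q$ is produced, $\delta(F,Q)\le\delta(F,R)+\delta(R,Q)=\delta_{\mathrm r}+O(\mu)=O(\mu)$, and the first step completes the proof.
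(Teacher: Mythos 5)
Your proposal correctly identifies the two key devices — the product property to build a bivariate error locator, and the reduction from $\rho$-robustness to exhibiting a nearby tensor codeword $Q$ — but it leaves the decisive step, the extraction of $Q$ with error $O(\mu)$ rather than $O(\sqrt\mu)$, as an explicit open gap. You even flag it yourself (``which I expect to be the main obstacle,'' ``this is where the real effort lies''). The paper's proof resolves exactly this gap, and by a route different from the one you sketch, so let me spell out the two divergences.

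First, your error locator is global: you pick the least $D$ with $(D-g)^2>|B|$, giving $D\approx g+\sqrt\mu\,n$, and find $E\in\C_1(D)\otimes\C_2(D)$ vanishing on all of $B$. The paper instead works locally: it first discards the few rows/columns with error fraction above $\epsilon/\gamma'$, then picks a small $L\times L$ submatrix $M$ (with $L=2(\ell+g)$, so $L=O(\sqrt n)$) where the residual error rate is at most $\epsilon$, and only there builds a locator $E\in\C_1(d)\otimes\C_2(d)$ with $d=\lfloor\sqrt\epsilon L\rfloor+g+2\approx g+\sqrt\epsilon\cdot 2(\ell+g)$ — several orders smaller than your $D$. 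The relation $ER=EC=N$ with $N\in\C_1(d+\ell)\otimes\C_2(d+\ell)$ is then \emph{propagated} outward (Claim~\ref{claim2}) via a distance argument: on any ``good'' column, $E\cdot C$ and $N$ agree on $>d+\ell$ of the $L$ indices in $M$ and hence coincide as $\C_2(d+\ell)$ codewords. Having $d+\ell<L$ is what this propagation needs, and it's exactly why the locator must be local and small; a global locator of size $\sqrt\mu\,n$ would not support such a step.

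Second, and more importantly, the paper never divides $W$ by $E$ or assembles row- and column-quotients. It instead finds a clean $L\times L$ submatrix $M'$ on which $E$ is never zero (a pigeonhole argument using the distance of $\C_1(d)$ and $\C_2(d)$, via Inequality~\eqref{ineq3}). On $M'$, $ER=EC$ and $E\ne0$ force $R=C$, so by Proposition~\ref{tensor_restrict} the patch $R|_{M'}$ extends to a full tensor codeword $Q\in\C_1(\ell)\otimes\C_2(\ell)$. A second propagation (again a distance argument, Inequality~\eqref{ineq2}) shows $Q$ agrees with $R$ on every row with few errors over the columns of $M'$, yielding $\delta(Q,R)\le 2\sqrt\epsilon$ — still the square-root bound. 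The linear bound is then obtained by a separate, purely combinatorial four-block argument at the end of the proof of Theorem~\ref{theorem_main}: a row on which $Q$ and $R$ disagree must disagree on at least $n-\ell$ columns, and likewise for columns with $C$. Arranging the disagreement rows and columns into a corner and partitioning into blocks $A_{11},A_{12},A_{21},A_{22}$, the density of $R$-vs-$C$ errors on $A_{21}\cup A_{22}$ and $A_{12}\cup A_{22}$ is each $>1/2$ (Inequality~\eqref{ineq5}), and since the total error fraction is $\epsilon$, both regions have size $<2\epsilon n^2$, giving $\delta(Q,R)+\delta(Q,C)\le 2\epsilon$. This upgrade uses nothing but the distance of $\C_1(\ell),\C_2(\ell)$ — no AG structure at all — and is the missing idea in your write-up. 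Without it (or some substitute), your argument terminates at the $O(\sqrt\mu)$ bound, which does not prove the theorem.
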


\cref{theorem_robust} is proved at the end of \cref{sec:proof-main}. Our proof combines elements from the proofs of \cite{AroSaf} and \cite{PolSpi}. A direct use of the proof from \cite{AroSaf} would have resulted in a $n = \Omega((k+g)^3)$ requirement. On the other hand, the proof of \cite{PolSpi} uses properties of unique factorization domain, which are no longer true in this general setting. By combining different ingredients we are able to get a bound that is intermediate while still being general. 

The fact that $n = \Omega(g^2)$ for the theorem to be useful does limit its applicability. Nevertheless the theorem is not vacuous even given the testability of Reed-Solomon codes and in particular, there exist infinitely many AG codes\footnote{For example, for any prime $p$, there is an elliptic curve over $\F_p$ with $p+1$ number of rational points (Theorem 14.18 in \cite{cox2022primes}). One can apply the construction of AG codes from Chapter 2 in \cite{stichtenoth2009algebraic} to these elliptic curves to get  $(p,p,1)$-sequences of AG codes.} with $g = o(\sqrt{n})$ and $n \approx q$.

\section{Proof of {\protect \cref{theorem_robust}}}\label{sec:proof-main}


Let $\C_1,\C_2$ be a pair of $(q,n,g)$-sequences of algebraic geometry codes. We first show that for a tensor product of codes $\C_1(\ell)$ and $\C_2(\ell)$, if the rows of a matrix are near codewords of $\C_1(\ell)$ and columns are near codewords of $\C_2(\ell)$, then it is also close to a codeword of the tensor product code.
\begin{theorem}
\label{theorem_main}
There exist positive constants $\epsilon_0, c_0$ that make the following true.

For all $0 < \epsilon < \epsilon_0$, integers $n, g, \ell$ and $q$ such that $\ell > \max\{c_0, g\}$, and $n > c_0 (\ell + g)^2$, and for all $(q,n,g)$-sequences of AG codes $\C_1,\C_2$: If $R\in \C_1(\ell) \otimes \F_q^n$ and $C \in \F_q^n \otimes \C_2(\ell)$ are such that $\delta(R,C)=\epsilon$, then there exists $Q\in \C_1(\ell)\otimes \C_2(\ell)$ such that 
\[\delta(Q,R) + \delta(Q, C) \le 2\epsilon.\]
\end{theorem}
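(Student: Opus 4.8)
The plan is to produce $Q$ by locating a large rectangle of coordinates on which $R$ and $C$ already coincide with a genuine tensor codeword, and then extending that codeword to all of $[n]\times[n]$.

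First I would isolate the ``good'' rows and columns. For a column $x$ set $e_c(x)=|\{y:R(x,y)\ne C(x,y)\}|$ and for a row $y$ set $e_r(y)=|\{x:R(x,y)\ne C(x,y)\}|$; since $\delta(R,C)=\epsilon$ we have $\sum_x e_c(x)=\sum_y e_r(y)=\epsilon n^2$. Call a column $x$ good if $e_c(x)<\tfrac12(n-\ell)=\tfrac12\dist(\C_2(\ell))$ and a row $y$ good if $e_r(y)<\tfrac12\dist(\C_1(\ell))$; by Markov's inequality and $n\gg\ell$ (which holds since $n>c_0(\ell+g)^2$), all but an $O(\epsilon)$ fraction of rows and of columns are good, and for a good column $x$ the codeword $C(x,\cdot)\in\C_2(\ell)$ is the \emph{unique} codeword within $\tfrac12\dist(\C_2(\ell))$ of $R(x,\cdot)$. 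I will repeatedly use that $\dist(\C_i(\ell))\ge n-\ell$ makes the restriction map $\C_i(\ell)\to\C_i(\ell)|_S$ injective for every $|S|\ge\ell+1$, and that (by \cref{tensor_restrict}) the same rigidity lifts to $\C_1(\ell)\otimes\C_2(\ell)$ on rectangles $S\times T$ with $|S|,|T|\ge\ell+1$. Consequently, if I can find a rectangle $X^\ast\times Y^\ast$ with $|X^\ast|,|Y^\ast|\ge\ell+1$ on which $R\equiv C$, then $R|_{X^\ast\times Y^\ast}$ has rows that are restricted $\C_1(\ell)$-codewords and (since it equals $C$ there) columns that are restricted $\C_2(\ell)$-codewords, so by \cref{tensor_restrict} it lies in $(\C_1(\ell)\otimes\C_2(\ell))|_{X^\ast\times Y^\ast}$ and extends to a \emph{unique} $Q\in\C_1(\ell)\otimes\C_2(\ell)$; moreover $Q$ then agrees with $R$ on every row of $Y^\ast$ (two $\C_1(\ell)$-codewords equal on $X^\ast$ are equal) and with $C$ on every column of $X^\ast$, so $\delta(Q,R)\le(n-|Y^\ast|)/n$ and $\delta(Q,C)\le(n-|X^\ast|)/n$. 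It therefore suffices to find such a rectangle with $(n-|X^\ast|)+(n-|Y^\ast|)\le 2\epsilon n$.

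The hard part is establishing that such a large clean rectangle exists. A pure counting bound is too weak: $\sum_x e_c(x)=\epsilon n^2$ does not prevent the error set from meeting every row and every column, and the naive remedy---reconstructing each row of $Q$ by interpolating from a fixed set of $\ell+1$ good columns---loses a factor of $\ell$, since those $\ell+1$ columns can collectively carry $\Omega(\ell\epsilon n)$ errors. The resolution must exploit that having all rows exactly in $\C_1(\ell)$ and all columns exactly in $\C_2(\ell)$ forces the error matrix $\mathrm{Err}=R-C$ to be highly structured, and here the product property of \cref{def:ag-codes} plays the role that ``a low-degree polynomial has few zeros'' plays for Reed--Solomon codes. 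Concretely, for a set $T$ of good columns with $|T|\ge\ell+g+2$ the code $\C_1(\ell)|_T$ has dimension $<|T|$, hence a nonzero dual codeword $p_T$; since each row of $R|_{T\times[n]}$ lies in $\C_1(\ell)|_T$ we get $\sum_{x\in T}p_T(x)R(x,\cdot)=0$, whence $\sum_{x\in T}p_T(x)C(x,\cdot)$ is a codeword of $\C_2(\ell)$ supported inside $\bigcup_{x\in T:\,p_T(x)\ne0}\{y:(x,y)\text{ an error}\}$ and therefore vanishes as soon as that union has size $<\dist(\C_2(\ell))=n-\ell$, forcing also $\sum_{x\in T}p_T(x)\,\mathrm{Err}(x,\cdot)=0$. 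Playing such relations against $\C_2(\ell)\star\C_2(m)\subseteq\C_2(\ell+m)$ propagates the constraint, and I expect it to pin down an $O(\epsilon n)$-size set of ``gross'' rows and columns---those in which $R$ and $C$ disagree in more than $\dist$-many places---outside of which $R\equiv C$ on a full rectangle of the required size. This is exactly the step that \cite{PolSpi} and \cite{AroSaf} carry out using resultants, greatest common divisors and unique factorization; doing it with only the product property is the main obstacle. It is also what dictates the block length: each use of the product property inflates the relevant ``degree'' by $\Theta(\ell+g)$, and the nesting needed to obtain a full clean rectangle (rather than the weaker conclusion in Arora--Safra, which costs a cube) compounds this to $n=\Omega((\ell+g)^2)$, matching the hypothesis $n>c_0(\ell+g)^2$.

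Finally, with a clean rectangle $X^\ast\times Y^\ast$ in hand satisfying $|X^\ast|,|Y^\ast|\ge\ell+1$ and $(n-|X^\ast|)+(n-|Y^\ast|)\le 2\epsilon n$, I take $Q$ to be the unique tensor codeword extending $R|_{X^\ast\times Y^\ast}=C|_{X^\ast\times Y^\ast}$, and the first paragraph gives $\delta(Q,R)+\delta(Q,C)\le\frac{n-|Y^\ast|}{n}+\frac{n-|X^\ast|}{n}\le 2\epsilon$. The hypotheses $\epsilon<\epsilon_0$ and $\ell>\max\{c_0,g\}$, with $\epsilon_0$ small and $c_0$ large enough, are precisely what keeps every ``$\dist-O(\epsilon n)$'' and ``$|T|-\ell-O(g)$'' slack strictly positive, so that all the unique-decoding, extension, and product-property steps above go through.
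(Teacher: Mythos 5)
Your opening and closing paragraphs correctly set up the reduction that the paper also ultimately exploits: if you can find a rectangle $X^\ast\times Y^\ast$ with $|X^\ast|,|Y^\ast|\ge\ell+1$ on which $R\equiv C$, then the unique tensor codeword $Q$ extending $R|_{X^\ast\times Y^\ast}$ agrees with $R$ on every row of $Y^\ast$ and with $C$ on every column of $X^\ast$ (using the distance $n-\ell$), so $\delta(Q,R)+\delta(Q,C)\le\frac{(n-|Y^\ast|)+(n-|X^\ast|)}{n}$. This is fine. However, your second paragraph --- the entire technical content of the theorem --- is explicitly left as a sketch: you write ``I expect it to pin down $\ldots$'' and then concede ``doing it with only the product property is the main obstacle.'' That is the theorem, and you have not proved it. Beyond the acknowledged gap, you are also aiming directly at a clean rectangle of co-size $2\epsilon n$; this is a much more delicate target than what the local propagation naturally produces. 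The paper does not try to reach $2\epsilon$ in one shot: Lemma~\ref{lemma1} only yields a $Q$ with $\delta(Q,R)\le 2\sqrt{\epsilon}$, and the tight $2\epsilon$ bound is obtained afterward by a separate counting argument over the four regions $A_{11},A_{12},A_{21},A_{22}$ (each bad row of $R$ forces $\geq(n-\ell-3\sqrt{\epsilon}n)>n/2$ disagreements inside $A_{21}$, and symmetrically for columns, so $|A_{12}\cup A_{22}|+|A_{21}\cup A_{22}|<2\epsilon n^2$). Skipping this two-stage bootstrap makes the rectangle-finding step unnecessarily strong.

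The mechanism you gesture at for finding the clean rectangle is also genuinely different from the paper's. You propose using a single dual codeword $p_T$ of $\C_1(\ell)|_T$ for a set $T$ of good columns and ``playing it against'' the product property. The paper instead works inside a small $L\times L$ submatrix $M$ (with $L=2(\ell+g)$) chosen to have $\le\epsilon$ error fraction, and constructs an \emph{error-locator tensor codeword} $0\ne E\in\C_1(d)\otimes\C_2(d)$ with $d=\lfloor\sqrt{\epsilon}L\rfloor+g+2$ vanishing on all errors in $M$; the key dimension count is $(d-g)^2>\epsilon L^2\geq$ (number of errors in $M$). The product property then places $ER$ row-wise in $\C_1(d+\ell)$ and $EC$ column-wise in $\C_2(d+\ell)$, both of large distance, which lets the identity $ER=EC=N$ on $M$ be propagated first to all rows and columns with few errors inside the $M$-window (Claim~\ref{claim2}), and then --- by locating an $L\times L$ submatrix $M'$ on which $E$ is nowhere zero --- to the conclusion $R\equiv C$ on $M'$, hence a $Q$ with $\delta(Q,R)\le 2\sqrt{\epsilon}$. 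It is precisely in Claim~\ref{claim2} and the choice of $M'$ that the inequalities $L(1-\epsilon/\gamma^2)>d+\ell$ and $n(1-\gamma-\gamma')-dL>L$ force $n=\Omega((\ell+g)^2)$; this is a concrete arithmetic constraint, not the informal ``nesting compounds the degree'' heuristic you offer. To turn your proposal into a proof you would need to replace the second paragraph with this full error-locator construction and the two propagation claims, and then add the four-region argument to tighten from $O(\sqrt{\epsilon})$ to $2\epsilon$.
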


\begin{proof}
We will prove the theorem for $\epsilon_0 = \frac{1}{100}$ and $c_0 = 15$.
Define the following constants:
$$\gamma = 2\sqrt{\epsilon}$$
$$\gamma' = 2\gamma = 4\sqrt{\epsilon}$$
$$L = 2 (\ell + g)$$
$$\d = \lfloor \sqrt{\epsilon} L \rfloor + g + 2$$
Note that these constants are chosen to satisfy the following inequalities:

\begin{equation} \label{ineq1}
    \gamma < \gamma'(1-\gamma')
\end{equation}
\begin{equation} \label{ineq2}
    \left(1-\frac{\epsilon}{\gamma^2}\right)L > d + \ell
\end{equation}
\begin{equation} \label{ineq3}
    n(1 - \gamma - \gamma') - dL > L
\end{equation}
\begin{equation} \label{ineq4}
    (2\sqrt{\epsilon} + \epsilon) \cdot \frac{n}{n-\ell} < 3\sqrt{\epsilon}
\end{equation}
\begin{equation} \label{ineq5}
    \frac{n - \ell - 3\sqrt{\epsilon}n}{n} > \frac 12
\end{equation}

We can quickly check that all of these inequalities hold for our choice of $\epsilon_0$ and $c_0$. Inequality \eqref{ineq1} holds because $1 < 2(1-4\sqrt{\epsilon})$. Inequality \eqref{ineq2} holds because the left-hand side can be written as $1.5(\ell + g)$ and the right-hand side can be written as $\lfloor \sqrt{\epsilon} \cdot 2(\ell+g) \rfloor + g + 2 +\ell$, which is at most $(1+2\sqrt{\epsilon})(\ell+g) + 2$.  Inequality \eqref{ineq3} holds because $(1+d)L < (1 + 2\sqrt{\epsilon}) (\ell + g)  \cdot 2(\ell + g) < 6(\ell + g)^2 < 6n/c_0  \leq n(1 - \gamma - \gamma').$ Inequality \eqref{ineq4} holds because $\frac{n}{n-\ell} < \frac{3}{2.1}$ and $2\sqrt{\epsilon} + \epsilon < 2.1 \sqrt{\epsilon}$. Inequality \eqref{ineq5} holds because $\ell < \frac{1}{15}n < (\frac 12 - 3\sqrt{\epsilon_0})n \implies \ell + 3\sqrt{\epsilon}n < \frac{n}{2}$.
Throughout the proof, we will note where we make use of each inequality.

Define an ``error'' to be a location $(x, y)$ at which $R(x, y) \neq C(x, y)$. The ``error fraction'' of a row is the number of errors in that row divided by its length (and similarly for columns). Note that the length of a row isn't always $n$; sometimes we will use notions like ``the error fraction of a row among the first $L$ columns''.

The majority of the proof of Theorem \ref{theorem_main} is spent on proving the following Lemma.

\begin{lemma}
\label{lemma1}
There exists a $Q \in \C_1(\ell) \otimes \C_2(\ell)$ such that $$\delta(Q, R) \leq 2 \sqrt{\epsilon}.$$
\end{lemma}

\begin{proof}
First, remove any row or column that has an error fraction greater than $\epsilon/\gamma'$. This removes less than $\gamma'$ fraction of the rows and columns (as $\delta(R,C)=\epsilon$). Call the sets of these deleted rows and columns $\Row_1, \Col_1$, respectively. After this, it is guaranteed that each row and column has at most $n\epsilon/\gamma'$ errors, and its new length is at least $n(1-\gamma')$, so it has an error fraction at most $\epsilon/(\gamma'(1-\gamma')) < \epsilon/\gamma$ by Inequality \eqref{ineq1}. The total error fraction of the matrix is still at most $\epsilon$ because the removed columns and rows each had error fractions above $2\epsilon$, so overall the removed entries had an error fraction above $\epsilon$.

Next, choose a submatrix $M \subset ([n] \setminus \Row_1) \times ([n] \setminus \Col_1)$ of size $L \times L$ such that the error fraction within $M$ is at most $\epsilon$ (in fact, the number of errors in $M$ is at most $\lfloor L^2 \epsilon\rfloor$). Such an $M$ must exist because choosing a random submatrix results in an error fraction at most $\epsilon$ in expectation. WLOG, say that $M$ lies in the top left of the matrix (so its rows and columns are indexed by $[L]$).

\begin{claim}
\label{claim1}
Recall that $\d = \lfloor\sqrt{\epsilon}L\rfloor + 2 + g$. There exist vectors $0 \neq E\in \C_1(\d)\otimes \C_2(\d)$ and $N\in \C_1(\d + \ell)\otimes \C_2(\d + \ell)$ such that 
\[E(x,y)R(x,y)=E(x,y)C(x,y)=N(x,y),\;\; \forall (x,y)\in M.\]
\end{claim}

\begin{proof}
Consider the projection of $\C_1(\d)\otimes \C_2(\d)$ onto the set of coordinates $\{(x,y) \in M \,|\, R(x,y) \neq C(x,y)\}$. Since $\C_1(\d)\otimes \C_2(\d)$ has dimension at least $(\d-g)^2 = (\lfloor \sqrt{\epsilon}L\rfloor + 2)^2 > \epsilon L^2$, but $\{(x,y) \in M \,|\, R(x,y) \neq C(x,y)\}$ has size at most $\epsilon L^2$, the projection has a non-trivial kernel. Choose any nonzero $E$ in this kernel. 

We have ensured that $E(x,y) = 0$ whenever $R$ and $C$ disagree on submatrix $M$, so
$$E(x,y)R(x,y) = E(x,y)C(x,y) \quad \forall (x,y) \in M.$$

Let's look at the projection of $ER$ on the submatrix $M$. Every row is an element of $\C_1(\d + \ell)|_{[L]}$ and every column is an element of $\C_2(\d + \ell)|_{[L]}$ (by product property of AG codes). Thus, $ER|_M$ is an element of $\C_1(\d + \ell)|_{[L]} \otimes \C_2(\d + \ell)|_{[L]}$. By Proposition \ref{tensor_restrict}, any element of $\C_1(\d + \ell)|_{[L]} \otimes \C_2(\d + \ell)|_{[L]} = (\C_1(\d + \ell) \otimes \C_2(\d + \ell))|_M$ can be extended to an element of $\C_1(\d + \ell) \otimes \C_2(\d + \ell)$. Choose any such extension and call it $N$. So we have
$$E(x,y)R(x,y) = E(x,y)C(x,y) = N(x,y) \quad \forall (x,y) \in M,$$
as desired. This proves Claim \ref{claim1}.
\end{proof}
Next we extend the relation between $E$, $R$, $C$ and $N$ to almost the entire matrix.

\begin{claim}
\label{claim2}
$E(x,y)R(x,y)=E(x,y)C(x,y)=N(x,y)$ for all  $(x,y)$ but $\gamma$ fraction of the remaining rows and columns. More formally, there exist sets $\Row_2 \subset [n] \setminus \Row_1$ and $\Col_2 \subset [n] \setminus \Col_1$ such that the equality holds on all of $([n] \setminus (\Row_1 \cup \Row_2)) \times ([n] \setminus (\Col_1 \cup \Col_2))$, and  that $$|\Row_2| \leq \gamma (n- |\Row_1|) \text{ and } |\Col_2| \leq \gamma (n- |\Col_1|).$$
\end{claim}

\begin{proof}
Call a row $y \in [n] \setminus \Row_1$ ``bad'' if

$$\Pr_{x \sim [L]}[R(x,y) \neq C(x,y)] > \frac{\epsilon}{\gamma}\cdot \frac{1}{\gamma}.$$

In other words, a bad row is one that has too many errors in the columns of $M$. Define a bad column analogously. Let the sets of bad rows and columns be $\Row_2$ and $\Col_2$, respectively.
Call the remaining rows and columns ``good'' (i.e. the good rows are $[n] \setminus (\Row_1 \cup \Row_2)$). The fraction $|\Row_2|/(n - |\Row_1|)$ must be at most $\gamma$ because, among the first $L$ columns, there is a combined error fraction of at most $\epsilon/\gamma$ (recall that each such column has at most $\epsilon/\gamma$ error fraction). The same holds for the columns.

Now we prove that $E(x,y) \cdot C(x, y) = N(x, y)$ on all good columns. Fix any good column $\hat x$. For any row $y_0 \in [L]$, since  row vectors $E (x, y_0) \cdot R(x, y_0)$ and $N(x, y_0)$ belong to $\C_1(\d + \ell)$ and agree on $L$ points (i.e. the columns in $[L]$), they must be identical because the distance of the code $C_1(\d + \ell)$ is at least  $n-\d - \ell$ which is greater than $n-L$ (by Inequality \eqref{ineq2}). So $E(\hat x, y_0) \cdot R(\hat x, y_0) = N(\hat x, y_0)$. Since $\hat x$ is good, there are at least $L\left(1-\epsilon/\gamma^2\right)$ values of $y_0 \in [L]$ such that

$$E(\hat x, y_0) \cdot C(\hat x, y_0) = E(\hat x, y_0) \cdot R(\hat x, y_0) = N(\hat x, y_0).$$

Now, since
$$L\left(1-\frac{\epsilon}{\gamma^2}\right) > \d + \ell, $$
by Inequality \eqref{ineq2}, this tells us that  the column vectors $E(\hat x, y) \cdot C(\hat x, y)$ and $N(\hat x, y)$ (which belong in $\C_2(\d + \ell)$) are identical. Thus, $E(x,y) \cdot C(x,y) = N(x,y)$ on all good columns. The same argument can be applied on the rows to show that $E(x,y) \cdot R(x,y) = N(x,y)$ on all good rows. On the intersection of all good columns and rows, we have
$$E(x,y)\cdot R(x,y)=N(x,y)=E(x,y)\cdot C(x,y).$$
This proves Claim \ref{claim2}.
\end{proof}

Now, we have the necessary claims to prove Lemma \ref{lemma1}. Our first step is to find a submatrix of size $L \times L$ upon which $E(x, y)$ is never $0$. To do this, consider the good submatrix $G = ([n] \setminus (\Row_1 \cup \Row_2)) \times ([n] \setminus (\Col_1 \cup \Col_2))$ upon which $E(x,y) \cdot R(x,y) = E(x,y) \cdot C(x,y)$ from Claim \ref{claim2}. Note that $G$ has at least $n(1-\gamma' - \gamma)$ rows and columns. Choose $L$ columns $\{x_1, \dots, x_L\} \subset [n] \setminus (\Col_1 \cup \Col_2)$ upon which $E$ is not identically $0$ (these must exist: simply consider a row upon which $E$ is not identically $0$, then choose among the $n(1-\gamma'-\gamma) - \d > L$ columns in $[n] \setminus (\Col_1 \cup \Col_2)$ in which the row takes a nonzero value). 

In each of these columns, $E$ is $0$ at most $\d$ times (this follows by the distance of the code; the all-zero vector is always a codeword of a linear code). Thus, there must be $n(1 - \gamma' - \gamma) - dL$ rows in $[n] \setminus (\Row_1 \cup \Row_2)$ upon which $E$ is never $0$ in the columns $\{x_1, \dots, x_L\}$. But since $n(1-\gamma' - \gamma) - dL > L$ by Inequality \eqref{ineq3}, we can find an $L \times L$ submatrix of $G$ upon which $E$ is never $0$. Call this submatrix $M'$. Since $E(x,y) \cdot R(x,y) = E(x,y) \cdot C(x,y)$ on $ M'\subset G$, we know that $R(x,y) = C(x,y)$ on $M'$.

Permute the matrix so that the rows and columns of $M'$ are indexed by $[L]$. Every row of $M'$ is an element of $\C_1(\ell)|_{[L]}$ and every column of $M'$ is an element of $\C_2(\ell)|_{[L]}$. Thus, by Proposition \ref{tensor_restrict}, the submatrix $M'$ is an element of $\C_1(\ell)|_{[L]} \otimes \C_1(\ell)|_{[L]} = (\C_1(\ell) \otimes \C_2(\ell))|_{M'}$ and can be extended to some $Q \in \C_1(\ell) \otimes \C_2(\ell)$ on the entire matrix, such that $R(x, y) = C(x, y) = Q(x, y)$ on $M'$. 

For this last step, we can bring back all of the rows and columns $\Row_1, \Col_1, \Row_2, \Col_2$ that were previously ignored. Let $\Row_3 \subset [n]$ be the set of all rows that have an error fraction above $\epsilon/\gamma^2$ among the first $L$ columns. Since each column in $[L] \subset [n] \setminus \Col_1$ has an overall error fraction at most $\epsilon/\gamma$, we know that $|\Row_3| < \gamma n$.

We will show that $R$ is identical to $Q$ on all rows not in $\Row_3$. Consider any row $\hat y \in [n] \setminus \Row_3$. For any column $x_0 \in [L]$, we know that the column vectors $Q(x_0, y)$ and $C(x_0, y)$ belong to $\C_2(\ell)$ and are identical (using distance of the code), so $Q(x_0, \hat y) = C(x_0, \hat y)$. Thus, for any $x_0$ such that $R(x_0, \hat y) = C(x_0, \hat y)$, we have $Q(x_0, \hat y) = R(x_0, \hat y)$ as well. Since we're assuming that there is an error fraction of at most $\epsilon/\gamma^2$ among the first $L$ columns of row $\hat y$,  row vectors $Q(x, \hat y)$ and $R(x, \hat y)$ (in $\C_1(\ell)$) are equal in at least $(1-\epsilon/\gamma^2)L$ places. Thus, $Q(x, \hat y)$ is identical to $R(x, \hat y)$ (using Inequality \eqref{ineq2} and the fact that distance of code $\C_1(\ell)$ is at least $n-\ell$).

Finally, this means that $Q(x,y) = R(x,y)$ on all points in $([n] \setminus \Row_3) \times [n]$. Since $\Row_3$ contains at most $\gamma$ fraction of all rows, we know $\delta(Q, R) \leq \gamma = 2\sqrt{\epsilon}$. This proves Lemma \ref{lemma1}.
\end{proof}

Now we prove Theorem \ref{theorem_main}.
If two $\C_1(\ell)$ row vectors are distinct, they must disagree on at least $n - \ell$ points (which follows from the distance of the code). Thus, the fraction of rows upon which $R$ and $Q$ disagree anywhere is at most $\frac{n}{n - \ell}2 \sqrt{\epsilon} < 3 \sqrt{\epsilon}$ by Inequality \eqref{ineq4}. Similarly, the fraction of errors between $C$ and $Q$ is at most $2\sqrt{\epsilon} + \epsilon$, so the fraction of columns upon which $C$ and $Q$ disagree is at most $\frac{n}{n - \ell}(2\sqrt{\epsilon} + \epsilon) < 3 \sqrt{\epsilon}$, again by Inequality \eqref{ineq4}. Permute the matrix so that these rows and columns are contiguous in the bottom right. Label the four regions of the matrix $A_{11}$, $A_{12}$, $A_{21}$, $A_{22}$:

\begin{center}
\begin{tikzpicture}
\draw[thick] (0,0) rectangle (5,5);
\draw[thick] (0,0) rectangle (4,1);
\draw[thick] (4,1) rectangle (5,5);
\node at (2,3) {$A_{11}$};
\node at (4.5,3) {$A_{12}$};
\node at (2,.5) {$A_{21}$};
\node at (4.5,.5) {$A_{22}$};
\end{tikzpicture}
\end{center}

where $A_{11} \cup A_{12}$ is the region where $R(x,y) = Q(x,y)$ and $A_{11} \cup A_{21}$ is the region where $C(x,y) = Q(x,y)$. The submatrix $A_{22}$ has size at most $3\sqrt{\epsilon}n$ in each dimension. Now, notice that each row of $A_{21}$ has at least $n - \ell - 3\sqrt{\epsilon}n$ disagreements between $R$ and $Q = C$. Thus,

$$\frac{\dist_{A_{21}}(R, C)}{|A_{21} \cup A_{22}|} \geq \frac{n - \ell - 3\sqrt{\epsilon}n}{n} > \frac 12$$
by Inequality \eqref{ineq5}, where $\dist_M(R, C)$ is the number of disagreements between $R$ and $C$ on submatrix $M$.
Applying the same logic to the columns, we get 
$$\frac{\dist_{A_{12}}(R, C)}{|A_{12} \cup A_{22}|} > \frac 12.$$
But now, since
$$\dist_{A_{12}}(R,C) + \dist_{A_{21}}(R,C) \leq \dist(R,C) = \epsilon n^2,$$
we can combine these inequalities to get
$$|A_{12} \cup A_{22}| + |A_{21} \cup A_{22}| < 2\epsilon n^2.$$
Finally, since $\dist(Q, R) \leq |A_{12} \cup A_{22}|$ and $\dist(Q, C) \leq |A_{21} \cup A_{22}|$, we conclude that
$$\delta(Q, R) + \delta(Q, C) \leq 2\epsilon.$$
This finishes the proof of Theorem \ref{theorem_main}.
\end{proof}

\begin{proof}[\textbf{Proof of Theorem \ref{theorem_robust}}]
We prove the theorem for $\rho = \epsilon_0/2$. The constants $c_0$ and $\epsilon_0$ are defined in Theorem \ref{theorem_main}. For any $F \in \F_q^{n \times n}$, let $R \in \C_1(\ell) \otimes \F_q^n$ and $C \in \F_q^n \otimes \C_2(\ell)$ be the closest vectors to $F$ in their respective codes. Let $\epsilon = \delta(R, C)$.

If $\epsilon \geq \epsilon_0$, then $\rho = \epsilon_0/2$ trivially works because
$$\frac{\epsilon_0}{2} \cdot \delta(F, \C_1(\ell) \otimes \C_2(\ell)) \leq \frac{\epsilon_0}{2} \leq \frac 12 \delta(R, C) \leq \frac 12 [\delta(F, R) + \delta(F, C)].$$

Otherwise, if $\epsilon < \epsilon_0$, we apply Theorem \ref{theorem_main} to find a $Q \in \C_1(\ell) \otimes \C_2(\ell)$ such that $\delta(Q, R) + \delta(Q, C) \leq 2 \epsilon$. Then we know
\begin{align*}
    \delta(F, Q) &\leq \min(\delta(F, R) + \delta(R, Q), \delta(F, C) + \delta(C, Q)) \\
    &\leq \frac 12 \left(\delta(F, R) + \delta(F, C) + \delta(R, Q) + \delta(C, Q)\right)\\
    &\leq \frac 12 \left(\delta(F, R) + \delta(F, C) + 2\epsilon\right)\\
    &\leq \frac 32 \left(\delta(F, R) + \delta(F, C)\right) \\
    \frac 13 \delta(F, Q) &\leq \frac 12 \left[\delta(F, R) + \delta(F, C)\right].
\end{align*}
Since $\rho < 1/3$, the theorem is proven.
\end{proof}
\section*{Acknowledgements}
We would like to thank Venkatesan Guruswami and Swastik Kopparty for valuable conversations on AG codes, and the anonymous referees of Random 2025 for their careful reading and corrections.

\bibliographystyle{alpha}
\bibliography{refs}

@article{KalPan:3-local,
  author = {Gleb Kalachev and Pavel Panteleev},
  title = {Maximally Extendable Product Codes are Good Coboundary Expanders},
  journal = {CoRR},
  volume = {abs/2501.01411},
  year = {2025},
  url = {https://doi.org/10.48550/arXiv.2501.01411},
  doi = {10.48550/arXiv.2501.01411},
  eprinttype = {arXiv},
  eprint = {2501.01411},
  timestamp = {Tue, 11 Feb 2025 00:00:00 +0100},
  biburl = {https://dblp.org/rec/journals/corr/abs-2501-01411.bib},
  bibsource = {dblp computer science bibliography, https://dblp.org},
  _bib2doi_selected = {dblp:/rec/journals/corr/abs-2501-01411.bib},
  _bib2doi_confirmed = {true},
}

@article{KalPan:2-local,
  author = {Gleb Kalachev and Pavel Panteleev},
  title = {Two-sided Robustly Testable Codes},
  journal = {CoRR},
  volume = {abs/2206.09973},
  year = {2022},
  url = {https://doi.org/10.48550/arXiv.2206.09973},
  doi = {10.48550/arXiv.2206.09973},
  eprinttype = {arXiv},
  eprint = {2206.09973},
  timestamp = {Mon, 27 Jun 2022 01:00:00 +0200},
  biburl = {https://dblp.org/rec/journals/corr/abs-2206-09973.bib},
  bibsource = {dblp computer science bibliography, https://dblp.org},
  _bib2doi_selected = {dblp:/rec/journals/corr/abs-2206-09973.bib},
  _bib2doi_confirmed = {true},
}

@inproceedings{DinurHLV23,
  author = {Irit Dinur and Min{-}Hsiu Hsieh and Ting{-}Chun Lin and Thomas Vidick},
  editor = {Barna Saha and Rocco A. Servedio},
  title = {Good Quantum {LDPC} Codes with Linear Time Decoders},
  booktitle = {Proceedings of the 55th Annual {ACM} Symposium on Theory of Computing, {STOC} 2023, Orlando, FL, USA, June 20-23, 2023},
  pages = {905--918},
  publisher = {{ACM}},
  year = {2023},
  url = {https://doi.org/10.1145/3564246.3585101},
  doi = {10.1145/3564246.3585101},
  timestamp = {Sun, 19 Jan 2025 00:00:00 +0100},
  biburl = {https://dblp.org/rec/conf/stoc/DinurHLV23.bib},
  bibsource = {dblp computer science bibliography, https://dblp.org},
  _bib2doi_selected = {dblp:/rec/conf/stoc/DinurHLV23.bib},
  _bib2doi_confirmed = {true},
}

@article{RubinfeldS,
  author = {Ronitt Rubinfeld and Madhu Sudan},
  title = {Robust Characterizations of Polynomials with Applications to Program Testing},
  journal = {{SIAM} J. Comput.},
  volume = {25},
  number = {2},
  pages = {252--271},
  year = {1996},
  url = {https://doi.org/10.1137/S0097539793255151},
  doi = {10.1137/S0097539793255151},
  timestamp = {Tue, 14 Jun 2022 01:00:00 +0200},
  biburl = {https://dblp.org/rec/journals/siamcomp/RubinfeldS96.bib},
  bibsource = {dblp computer science bibliography, https://dblp.org},
  _bib2doi_selected = {dblp:/rec/journals/siamcomp/RubinfeldS96.bib},
  _bib2doi_confirmed = {true},
}

@article{Viderman:tensor,
  author = {Michael Viderman},
  title = {A combination of testability and decodability by tensor products},
  journal = {Random Struct. Algorithms},
  volume = {46},
  number = {3},
  pages = {572--598},
  year = {2015},
  url = {https://doi.org/10.1002/rsa.20498},
  doi = {10.1002/rsa.20498},
  timestamp = {Fri, 26 May 2017 01:00:00 +0200},
  biburl = {https://dblp.org/rec/journals/rsa/Viderman15.bib},
  bibsource = {dblp computer science bibliography, https://dblp.org},
  _bib2doi_selected = {dblp:/rec/journals/rsa/Viderman15.bib},
  _bib2doi_confirmed = {true},
}

@article{hoholdt1998algebraic,
  title = {Algebraic geometry codes},
  author = {H{\o}holdt, Tom and Van Lint, Jacobus H and Pellikaan, Ruud},
  journal = {Handbook of coding theory},
  volume = {1},
  number = {Part 1},
  pages = {871--961},
  year = {1998},
  publisher = {Elsevier Amsterdam},
  _bib2doi_finished = {true},
}

@book{stichtenoth2009algebraic,
  title = {Algebraic function fields and codes},
  author = {Henning Stichtenoth},
  volume = {254},
  year = {1993},
  publisher = {Springer},
  timestamp = {Fri, 29 Apr 2011 01:00:00 +0200},
  biburl = {https://dblp.org/rec/books/daglib/0084861.bib},
  bibsource = {dblp computer science bibliography, https://dblp.org},
  isbn = {978-3-540-56489-8},
  series = {Universitext},
  _bib2doi_selected = {dblp:/rec/books/daglib/0084861.bib},
  _bib2doi_confirmed = {true},
  _bib2doi_finished = {true},
}

@book{cox2022primes,
  title = {Primes of the Form x2+ ny2: Fermat, Class Field Theory, and Complex Multiplication. with Solutions},
  author = {Cox, David A},
  volume = {387},
  year = {2022},
  publisher = {American Mathematical Soc.},
  _bib2doi_finished = {true},
}

@inproceedings{KauSud,
  author = {Tali Kaufman and Madhu Sudan},
  editor = {Cynthia Dwork},
  title = {Algebraic property testing: the role of invariance},
  booktitle = {Proceedings of the 40th Annual {ACM} Symposium on Theory of Computing, Victoria, British Columbia, Canada, May 17-20, 2008},
  pages = {403--412},
  publisher = {{ACM}},
  year = {2008},
  url = {https://doi.org/10.1145/1374376.1374434},
  doi = {10.1145/1374376.1374434},
  timestamp = {Tue, 14 Jun 2022 01:00:00 +0200},
  biburl = {https://dblp.org/rec/conf/stoc/KaufmanS08.bib},
  bibsource = {dblp computer science bibliography, https://dblp.org},
  _bib2doi_selected = {dblp:/rec/conf/stoc/KaufmanS08.bib},
  _bib2doi_confirmed = {true},
}

@article{DuuKot,
  author = {Iwan M. Duursma and Ralf K{\"{o}}tter},
  title = {Error-locating pairs for cyclic codes},
  journal = {{IEEE} Trans. Inf. Theory},
  volume = {40},
  number = {4},
  pages = {1108--1121},
  year = {1994},
  url = {https://doi.org/10.1109/18.335964},
  doi = {10.1109/18.335964},
  timestamp = {Tue, 10 Mar 2020 00:00:00 +0100},
  biburl = {https://dblp.org/rec/journals/tit/DuursmaK94.bib},
  bibsource = {dblp computer science bibliography, https://dblp.org},
  _bib2doi_selected = {dblp:/rec/journals/tit/DuursmaK94.bib},
  _bib2doi_confirmed = {true},
}

@article{Pellikaan,
  author = {Ruud Pellikaan},
  title = {On decoding by error location and dependent sets of error positions},
  journal = {Discret. Math.},
  volume = {106-107},
  pages = {369--381},
  year = {1992},
  url = {https://doi.org/10.1016/0012-365X(92)90567-Y},
  doi = {10.1016/0012-365X(92)90567-Y},
  timestamp = {Fri, 12 Feb 2021 00:00:00 +0100},
  biburl = {https://dblp.org/rec/journals/dm/Pellikaan92.bib},
  bibsource = {dblp computer science bibliography, https://dblp.org},
  _bib2doi_selected = {dblp:/rec/journals/dm/Pellikaan92.bib},
  _bib2doi_confirmed = {true},
}

@article{GuSu-ag,
  author = {Venkatesan Guruswami and Madhu Sudan},
  title = {On representations of algebraic-geometry codes},
  journal = {{IEEE} Trans. Inf. Theory},
  volume = {47},
  number = {4},
  pages = {1610--1613},
  year = {2001},
  url = {https://doi.org/10.1109/18.923745},
  doi = {10.1109/18.923745},
  timestamp = {Tue, 14 Jun 2022 01:00:00 +0200},
  biburl = {https://dblp.org/rec/journals/tit/GuruswamiS01.bib},
  bibsource = {dblp computer science bibliography, https://dblp.org},
  _bib2doi_selected = {dblp:/rec/journals/tit/GuruswamiS01.bib},
  _bib2doi_confirmed = {true},
}

@article{BSS-tensor,
  author = {Eli Ben{-}Sasson and Madhu Sudan},
  title = {Robust locally testable codes and products of codes},
  journal = {Random Struct. Algorithms},
  volume = {28},
  number = {4},
  pages = {387--402},
  year = {2006},
  url = {https://doi.org/10.1002/rsa.20120},
  doi = {10.1002/rsa.20120},
  timestamp = {Tue, 14 Jun 2022 01:00:00 +0200},
  biburl = {https://dblp.org/rec/journals/rsa/Ben-SassonS06.bib},
  bibsource = {dblp computer science bibliography, https://dblp.org},
  _bib2doi_selected = {dblp:/rec/journals/rsa/Ben-SassonS06.bib},
  _bib2doi_confirmed = {true},
}

@article{BGKSV,
  author = {Eli Ben{-}Sasson and Venkatesan Guruswami and Tali Kaufman and Madhu Sudan and Michael Viderman},
  title = {Locally Testable Codes Require Redundant Testers},
  journal = {{SIAM} J. Comput.},
  volume = {39},
  number = {7},
  pages = {3230--3247},
  year = {2010},
  url = {https://doi.org/10.1137/090779875},
  doi = {10.1137/090779875},
  timestamp = {Tue, 14 Jun 2022 01:00:00 +0200},
  biburl = {https://dblp.org/rec/journals/siamcomp/Ben-SassonGKSV10.bib},
  bibsource = {dblp computer science bibliography, https://dblp.org},
  _bib2doi_selected = {dblp:/rec/journals/siamcomp/Ben-SassonGKSV10.bib},
  _bib2doi_confirmed = {true},
}

@inproceedings{BFLS,
  author = {L{\'{a}}szl{\'{o}} Babai and Lance Fortnow and Leonid A. Levin and Mario Szegedy},
  editor = {Cris Koutsougeras and Jeffrey Scott Vitter},
  title = {Checking Computations in Polylogarithmic Time},
  booktitle = {Proceedings of the 23rd Annual {ACM} Symposium on Theory of Computing, May 5-8, 1991, New Orleans, Louisiana, {USA}},
  pages = {21--31},
  publisher = {{ACM}},
  year = {1991},
  url = {https://doi.org/10.1145/103418.103428},
  doi = {10.1145/103418.103428},
  timestamp = {Sat, 30 Sep 2023 01:00:00 +0200},
  biburl = {https://dblp.org/rec/conf/stoc/BabaiFLS91.bib},
  bibsource = {dblp computer science bibliography, https://dblp.org},
  _bib2doi_selected = {dblp:/rec/conf/stoc/BabaiFLS91.bib},
  _bib2doi_confirmed = {true},
}

@inproceedings{PolSpi,
  author = {Alexander Polishchuk and Daniel A. Spielman},
  editor = {Frank Thomson Leighton and Michael T. Goodrich},
  title = {Nearly-linear size holographic proofs},
  booktitle = {Proceedings of the Twenty-Sixth Annual {ACM} Symposium on Theory of Computing, 23-25 May 1994, Montr{\'{e}}al, Qu{\'{e}}bec, Canada},
  pages = {194--203},
  publisher = {{ACM}},
  year = {1994},
  url = {https://doi.org/10.1145/195058.195132},
  doi = {10.1145/195058.195132},
  timestamp = {Sun, 25 Oct 2020 01:00:00 +0200},
  biburl = {https://dblp.org/rec/conf/stoc/PolishchukS94.bib},
  bibsource = {dblp computer science bibliography, https://dblp.org},
  _bib2doi_selected = {dblp:/rec/conf/stoc/PolishchukS94.bib},
  _bib2doi_confirmed = {true},
}

@article{AroSaf,
  author = {Sanjeev Arora and Shmuel Safra},
  title = {Probabilistic Checking of Proofs: {A} New Characterization of {NP}},
  journal = {J. {ACM}},
  volume = {45},
  number = {1},
  pages = {70--122},
  year = {1998},
  url = {https://doi.org/10.1145/273865.273901},
  doi = {10.1145/273865.273901},
  timestamp = {Sat, 30 Sep 2023 01:00:00 +0200},
  biburl = {https://dblp.org/rec/journals/jacm/AroraS98.bib},
  bibsource = {dblp computer science bibliography, https://dblp.org},
  _bib2doi_selected = {dblp:/rec/journals/jacm/AroraS98.bib},
  _bib2doi_confirmed = {true},
}

@inproceedings{valiant,
  author = {Paul Valiant},
  editor = {Chandra Chekuri and Klaus Jansen and Jos{\'{e}} D. P. Rolim and Luca Trevisan},
  title = {The Tensor Product of Two Codes Is Not Necessarily Robustly Testable},
  booktitle = {Approximation, Randomization and Combinatorial Optimization, Algorithms and Techniques, 8th International Workshop on Approximation Algorithms for Combinatorial Optimization Problems, {APPROX} 2005 and 9th InternationalWorkshop on Randomization and Computation, {RANDOM} 2005, Berkeley, CA, USA, August 22-24, 2005, Proceedings},
  series = {Lecture Notes in Computer Science},
  volume = {3624},
  pages = {472--481},
  publisher = {Springer},
  year = {2005},
  url = {https://doi.org/10.1007/11538462\_40},
  doi = {10.1007/11538462\_40},
  timestamp = {Tue, 23 May 2017 01:00:00 +0200},
  biburl = {https://dblp.org/rec/conf/approx/Valiant05.bib},
  bibsource = {dblp computer science bibliography, https://dblp.org},
  _bib2doi_selected = {dblp:/rec/conf/approx/Valiant05.bib},
  _bib2doi_confirmed = {true},
}

@inproceedings{DSW,
  author = {Irit Dinur and Madhu Sudan and Avi Wigderson},
  editor = {Josep D{\'{\i}}az and Klaus Jansen and Jos{\'{e}} D. P. Rolim and Uri Zwick},
  title = {Robust Local Testability of Tensor Products of {LDPC} Codes},
  booktitle = {Approximation, Randomization, and Combinatorial Optimization. Algorithms and Techniques, 9th International Workshop on Approximation Algorithms for Combinatorial Optimization Problems, {APPROX} 2006 and 10th International Workshop on Randomization and Computation, {RANDOM} 2006, Barcelona, Spain, August 28-30 2006, Proceedings},
  series = {Lecture Notes in Computer Science},
  volume = {4110},
  pages = {304--315},
  publisher = {Springer},
  year = {2006},
  url = {https://doi.org/10.1007/11830924\_29},
  doi = {10.1007/11830924\_29},
  timestamp = {Tue, 14 Jun 2022 01:00:00 +0200},
  biburl = {https://dblp.org/rec/conf/approx/DinurSW06.bib},
  bibsource = {dblp computer science bibliography, https://dblp.org},
  _bib2doi_selected = {dblp:/rec/conf/approx/DinurSW06.bib},
  _bib2doi_confirmed = {true},
}

@inproceedings{DELLM,
  author = {Irit Dinur and Shai Evra and Ron Livne and Alexander Lubotzky and Shahar Mozes},
  editor = {Stefano Leonardi and Anupam Gupta},
  title = {Locally testable codes with constant rate, distance, and locality},
  booktitle = {{STOC} '22: 54th Annual {ACM} {SIGACT} Symposium on Theory of Computing, Rome, Italy, June 20 - 24, 2022},
  pages = {357--374},
  publisher = {{ACM}},
  year = {2022},
  url = {https://doi.org/10.1145/3519935.3520024},
  doi = {10.1145/3519935.3520024},
  timestamp = {Mon, 05 Feb 2024 00:00:00 +0100},
  biburl = {https://dblp.org/rec/conf/stoc/DinurELLM22.bib},
  bibsource = {dblp computer science bibliography, https://dblp.org},
  _bib2doi_selected = {dblp:/rec/conf/stoc/DinurELLM22.bib},
  _bib2doi_confirmed = {true},
}

@inproceedings{PanKal,
  author = {Pavel Panteleev and Gleb Kalachev},
  editor = {Stefano Leonardi and Anupam Gupta},
  title = {Asymptotically good Quantum and locally testable classical {LDPC} codes},
  booktitle = {{STOC} '22: 54th Annual {ACM} {SIGACT} Symposium on Theory of Computing, Rome, Italy, June 20 - 24, 2022},
  pages = {375--388},
  publisher = {{ACM}},
  year = {2022},
  url = {https://doi.org/10.1145/3519935.3520017},
  doi = {10.1145/3519935.3520017},
  timestamp = {Tue, 14 Jun 2022 01:00:00 +0200},
  biburl = {https://dblp.org/rec/conf/stoc/PanteleevK22.bib},
  bibsource = {dblp computer science bibliography, https://dblp.org},
  _bib2doi_selected = {dblp:/rec/conf/stoc/PanteleevK22.bib},
  _bib2doi_confirmed = {true},
}

@article{Meir,
  author = {Or Meir},
  title = {Combinatorial Construction of Locally Testable Codes},
  journal = {{SIAM} J. Comput.},
  volume = {39},
  number = {2},
  pages = {491--544},
  year = {2009},
  url = {https://doi.org/10.1137/080729967},
  doi = {10.1137/080729967},
  timestamp = {Sat, 27 May 2017 01:00:00 +0200},
  biburl = {https://dblp.org/rec/journals/siamcomp/Meir09.bib},
  bibsource = {dblp computer science bibliography, https://dblp.org},
  _bib2doi_selected = {dblp:/rec/journals/siamcomp/Meir09.bib},
  _bib2doi_confirmed = {true},
}

@inproceedings{Viderman,
  author = {Michael Viderman},
  title = {Strong LTCs with Inverse Poly-Log Rate and Constant Soundness},
  booktitle = {54th Annual {IEEE} Symposium on Foundations of Computer Science, {FOCS} 2013, 26-29 October, 2013, Berkeley, CA, {USA}},
  pages = {330--339},
  publisher = {{IEEE} Computer Society},
  year = {2013},
  url = {https://doi.org/10.1109/FOCS.2013.43},
  doi = {10.1109/FOCS.2013.43},
  timestamp = {Thu, 23 Mar 2023 00:00:00 +0100},
  biburl = {https://dblp.org/rec/conf/focs/Viderman13.bib},
  bibsource = {dblp computer science bibliography, https://dblp.org},
  _bib2doi_selected = {dblp:/rec/conf/focs/Viderman13.bib},
  _bib2doi_confirmed = {true},
}

@inproceedings{LevZem,
  author = {Anthony Leverrier and Gilles Z{\'{e}}mor},
  title = {Quantum Tanner codes},
  booktitle = {63rd {IEEE} Annual Symposium on Foundations of Computer Science, {FOCS} 2022, Denver, CO, USA, October 31 - November 3, 2022},
  pages = {872--883},
  publisher = {{IEEE}},
  year = {2022},
  url = {https://doi.org/10.1109/FOCS54457.2022.00117},
  doi = {10.1109/FOCS54457.2022.00117},
  timestamp = {Tue, 21 Mar 2023 00:00:00 +0100},
  biburl = {https://dblp.org/rec/conf/focs/LeverrierZ22.bib},
  bibsource = {dblp computer science bibliography, https://dblp.org},
  _bib2doi_selected = {dblp:/rec/conf/focs/LeverrierZ22.bib},
  _bib2doi_confirmed = {true},
}

@article{CopRud,
  author = {Don Coppersmith and Atri Rudra},
  title = {On the Robust Testability of Product of Codes},
  journal = {Electron. Colloquium Comput. Complex.},
  volume = {{TR05-104}},
  year = {2005},
  url = {https://eccc.weizmann.ac.il/eccc-reports/2005/TR05-104/index.html},
  eprinttype = {ECCC},
  eprint = {TR05-104},
  timestamp = {Wed, 28 Sep 2022 01:00:00 +0200},
  biburl = {https://dblp.org/rec/journals/eccc/ECCC-TR05-104.bib},
  bibsource = {dblp computer science bibliography, https://dblp.org},
  _bib2doi_selected = {dblp:/rec/journals/eccc/ECCC-TR05-104.bib},
  _bib2doi_confirmed = {true},
}

@article{GolMei,
  author = {Oded Goldreich and Or Meir},
  title = {The tensor product of two good codes is not necessarily robustly testable},
  journal = {Inf. Process. Lett.},
  volume = {112},
  number = {8-9},
  pages = {351--355},
  year = {2012},
  url = {https://doi.org/10.1016/j.ipl.2012.01.007},
  doi = {10.1016/j.ipl.2012.01.007},
  timestamp = {Fri, 30 Nov 2018 00:00:00 +0100},
  biburl = {https://dblp.org/rec/journals/ipl/GoldreichM12.bib},
  bibsource = {dblp computer science bibliography, https://dblp.org},
  _bib2doi_selected = {dblp:/rec/journals/ipl/GoldreichM12.bib},
  _bib2doi_confirmed = {true},
}

@article{MIPRE,
  author = {Zhengfeng Ji and Anand Natarajan and Thomas Vidick and John Wright and Henry Yuen},
  title = {MIP*=RE},
  journal = {CoRR},
  volume = {abs/2001.04383},
  year = {2020},
  url = {https://arxiv.org/abs/2001.04383},
  eprinttype = {arXiv},
  eprint = {2001.04383},
  timestamp = {Sat, 03 May 2025 01:00:00 +0200},
  biburl = {https://dblp.org/rec/journals/corr/abs-2001-04383.bib},
  bibsource = {dblp computer science bibliography, https://dblp.org},
  doi = {10.48550/arXiv.2001.04383},
  _bib2doi_selected = {dblp:/rec/journals/corr/abs-2001-04383.bib},
  _bib2doi_confirmed = {true},
}

@article{quantum-tensor,
  author = {Zhengfeng Ji and Anand Natarajan and Thomas Vidick and John Wright and Henry Yuen},
  title = {Quantum soundness of testing tensor codes},
  journal = {CoRR},
  volume = {abs/2111.08131},
  year = {2021},
  url = {https://arxiv.org/abs/2111.08131},
  eprinttype = {arXiv},
  eprint = {2111.08131},
  timestamp = {Sat, 03 May 2025 01:00:00 +0200},
  biburl = {https://dblp.org/rec/journals/corr/abs-2111-08131.bib},
  bibsource = {dblp computer science bibliography, https://dblp.org},
  doi = {10.48550/arXiv.2111.08131},
  _bib2doi_selected = {dblp:/rec/journals/corr/abs-2111-08131.bib},
  _bib2doi_confirmed = {true},
}

\end{document}